\title{An Improved Algorithm for Incremental DFS Tree in Undirected Graphs}
\titlerunning{An Improved Algorithm for Incremental DFS Tree in Undirected Graphs}
\author{Lijie Chen}{Massachusetts Institute of Technology}{lijieche@mit.edu}{}{}
\author{Ran Duan}{Tsinghua University}{duanran@mail.tsinghua.edu.cn}{}{}
\author{Ruosong Wang}{Carnegie Mellon University}{ruosongw@andrew.cmu.edu}{}{}
\author{Hanrui Zhang}{Duke University}{hrzhang@cs.duke.edu}{}{}
\author{Tianyi Zhang}{Tsinghua University}{tianyi-z16@mails.tsinghua.edu.cn}{}{}
\authorrunning{L. Chen, R. Duan, R. Wang, H. Zhang and T. Zhang}
\subjclass{E.1 Data structures, G.2.2 Graph theory.}
\keywords{DFS tree, fractional cascading, fully dynamic algorithm}
\begin{document}
\newcommand{\dfs}{\textsf{DFS}}
\newcommand{\pathq}{\textsf{PathQuery}}
\newcommand{\casedge}{\mathcal{F}}
\newcommand{\casanc}{\textsf{anc}}
\newcommand{\cas}{\textsf{cas}}
\newcommand{\size}{\textsf{size}}
\newcommand{\incremain}{\textsf{BatchInsert}}
\newcommand{\data}{\mathcal{D}}
\maketitle
%!TEX root = dfs_tree.tex

\begin{abstract}
	Depth first search (DFS) tree is one of the most well-known data structures for designing efficient graph algorithms. Given an undirected graph $G=(V,E)$ with $n$ vertices and $m$ edges, the textbook algorithm takes $O(n+m)$ time to construct a DFS tree. In this paper, we study the problem of maintaining a DFS tree when the graph is undergoing incremental updates. Formally, we show:
	\begin{quote}
		Given an arbitrary online sequence of edge or vertex insertions, there is an algorithm that reports a DFS tree in $O(n)$ worst case time per operation, and requires $O\left(\min\{m \log n, n^2\}\right)$ preprocessing time.
	\end{quote}
	
	Our result improves the previous $O(n \log^3 n)$ worst case update time algorithm by Baswana et al.~\cite{baswana2016dynamic} and the $O(n \log n)$ time by Nakamura and Sadakane~\cite{nakamura2017space}, and matches the trivial $\Omega(n)$ lower bound when it is required to explicitly output a DFS tree.
	
	Our result builds on the framework introduced in the breakthrough work by Baswana et al.~\cite{baswana2016dynamic}, together with a novel use of a tree-partition lemma by Duan and Zhang~\cite{duan2016improved}, and the celebrated fractional cascading technique by Chazelle and Guibas~\cite{chazelle1986fractional,chazelle1986fractional2}.
\end{abstract}

\section{Introduction}

Depth First Search (DFS) is one of the most renowned graph traversal techniques. After Tarjan's seminal work~\cite{tarjan1972depth}, it demonstrates its power by leading to efficient algorithms to many fundamental graph problems, e.g., biconnected components, strongly connected components, topological sorting, bipartite matching, dominators in directed graph and planarity testing.

Real world applications often deal with graphs that keep changing with time. Therefore it is natural to study the dynamic version of graph problems, where there is an online sequence of updates on the graph, and the algorithm aims to maintain the solution of the studied graph problem efficiently after seeing each update.
%Clearly, we want the update time to be much smaller than that of the best static algorithm for that problem.
The last two decades have witnessed a surge of research in this area, like connectivity~\cite{eppstein1997sparsification,henzinger1999randomized,holm2001poly,kapron2013dynamic}, reachability~\cite{roditty2008improved,sankowski2004dynamic}, shortest path~\cite{demetrescu2004new,roditty2012dynamic}, bipartite matching~\cite{baswana2011fully,neiman2016simple}, and min-cut~\cite{thorup2001fully}. 

%A dynamic graph algorithm is said to be fully dynamic if it handles both insertion and deletion. A partial dynamic graph algorithm is said to be incremental or decremental if it handles only insertion or deletion updates respectively. In this paper, we study the problem of maintaining a DFS tree efficiently in several dynamic settings.

We consider the dynamic maintenance of DFS trees in undirected graphs. As observed by Baswana et al.~\cite{baswana2016dynamic} and Nakamura and Sadakane~\cite{nakamura2017space}, the {\em incremental} setting, where edges/vertices are added but never deleted from the graph, is arguably easier than the {\em fully dynamic} setting where both kinds of updates can happen --- in fact, they provide algorithms for incremental DFS with $\tilde{O}(n)$ worst case update time, which is close to the trivial $\Omega(n)$ lower bound when it is required to explicitly report a DFS tree after each update. {\bf\em So, is there an algorithm that requires nearly linear preprocessing time and space, and reports a DFS tree after each incremental update in $O(n)$ time?} In this paper, we study the problem of maintaining a DFS tree in the incremental setting, and give an affirmative answer to this question.

\subsection{Previous works on dynamic DFS}

Despite the significant role of DFS tree in static algorithms, there is limited progress on maintaining a DFS tree in the {\em dynamic} setting.

Many previous works focus on the {\em total time} of the algorithm for any arbitrary updates. Franciosa et al.~\cite{franciosa1997incremental} designed an incremental algorithm for
maintaining a DFS tree in a DAG from a given source, with $O(mn)$ total time for an arbitrary sequence of edge insertions; 
Baswana and Choudhary~\cite{baswana2015dynamic} designed a decremental algorithm for maintaining a DFS tree in a DAG with expected $O(mn\log n)$ total time.
For undirected graphs, Baswana and Khan~\cite{baswana2014incremental} designed an incremental algorithm for maintaining a DFS tree with $O(n^2)$ total time.

These algorithms used to be the only results known for the dynamic DFS tree problem. 
However, none of these existing algorithms, despite that they are designed for only a partially dynamic environment, 
achieves a worst case bound of $o(m)$ on the update time. 

That barrier is overcome in the recent breakthrough work of Baswana et al.~\cite{baswana2016dynamic}, they provide, for undirected graphs, a fully dynamic algorithm with worst case $O(\sqrt{mn} \log^{2.5} n)$ update time, and an incremental algorithm with worst case $O(n \log^{3} n)$ update time. Due to the rich information in a DFS tree, their results directly imply faster worst case fully dynamic algorithms for subgraph connectivity, biconnectivity and 2-edge connectivity.

The results of Baswana et al.~\cite{baswana2016dynamic} suggest a promising way to further improve the worst case update time or space consumption for those fully dynamic algorithms by designing better dynamic algorithms for maintaining a DFS tree. In particular, based on the framework by Baswana et al.~\cite{baswana2016dynamic}, Nakamura and Sadakane~\cite{nakamura2017space} propose an algorithm which takes $O(\sqrt{mn} \log^{1.75} n / \sqrt{\log \log n})$ time per update in the fully dynamic setting and $O(n \log n)$ time in the incremental setting, and $O(m \log n)$ bits of space.

\subsection{Our results}
	
	In this paper, following the approach of~\cite{baswana2016dynamic}, we improve the update time for the incremental setting, also studied in~\cite{baswana2016dynamic}, by combining a better data structure, a novel tree-partition lemma by Duan and Zhang~\cite{duan2016improved} and the fractional-cascading technique by Chazelle and Guibas~\cite{chazelle1986fractional,chazelle1986fractional2}.
	
	For any set $U$ of incremental updates (insertion of a vertex/an edge), we let $G + U$ denote the graph obtained by applying the updates in $U$ to the graph $G$. Our results build on the following main theorem.
	
	%\begin{theorem}\label{main_thm}
	%	An undirected graph can be preprocessed to build a data structure of $O(m \log n)$ size, such that given a set $U$ of $k \le n$ updates, it can report a DFS tree of $G + U$ in $O(nk \log^2 n)$ time.
	%\end{theorem}
	\begin{theorem}\label{batch-ins}
		There is a data structure with $O(\min\{m \log n, n^2\})$ size, and can be built in $O(\min\{m \log n, n^2\})$ time, such that given a set $U$ of $k$ insertions, a DFS tree of $G + U$ can be reported in $O(n+k)$ time.
	\end{theorem}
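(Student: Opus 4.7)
The approach is to specialize the framework of Baswana et al. Given a DFS tree $T$ of $G$ and a batch $U$ of $k$ insertions, their analysis produces a DFS tree of $G+U$ by reattaching a collection of subtrees of $T$; the core subroutine is a \pathq that, given a subtree $S$ of $T$ and a root-to-node path $p$ in $T$, returns the highest vertex on $p$ with a neighbor inside $S$. Their framework performs $O(n)$ such queries before reporting the new DFS tree, so to prove Theorem~\ref{batch-ins} it suffices to preprocess $G$ in $O(\min\{m\log n, n^2\})$ time so that each \pathq runs in amortized $O(1)$ time. The remaining $O(n+k)$ budget then absorbs the queries and the writing of the new tree plus the $k$ inserted elements.

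The data structure $\data$ I would design has two layers. First, I apply the tree-partition lemma of Duan and Zhang to $T$, producing a hierarchical partition into macro pieces such that every root-to-node path crosses a bounded number of pieces and each individual piece is small. For each macro piece $Q$ and each vertex $u$ of $G$, I store the neighbors of $u$ inside $Q$ in a list sorted by their depth in $T$; then a \pathq along $p$ decomposes naturally into a small number of local queries, one per macro piece crossed by $p$. Second, to save the $\log n$ factor a naive binary search would incur per local query, I weave these sorted lists into a fractional cascading structure $\casedge$ in the sense of Chazelle and Guibas, whose catalog graph mirrors the tree of macro pieces. Because a \pathq visits the pieces of $p$ monotonically in bottom-up order, each local query is answered in amortized $O(1)$ via $\casedge$, giving amortized $O(1)$ per \pathq overall.

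The principal obstacle is keeping the preprocessing within $O(\min\{m\log n, n^2\})$ time and space. In the sparse regime, the total size of all per-piece neighbor lists is bounded by $O(m\log n)$ because, with a logarithmic-depth partition, each edge contributes to at most $O(\log n)$ pieces; building $\casedge$ on top is linear in this total. In the dense regime, a direct $O(n^2)$ matrix-style representation suffices and is cheaper. The subtler verification is that, for every \pathq the Baswana framework issues, the sequence of macro pieces it touches is compatible with the cascading order prewired into $\casedge$; this monotonicity follows from the contiguity of the Duan--Zhang partition along root-to-node paths together with the fact that \pathq is always invoked along such paths. Once these pieces fit together, executing $\incremain$ on $U$ costs $O(n+k)$ time as claimed.
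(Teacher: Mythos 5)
Your high-level framing matches the paper --- reduce Theorem~\ref{batch-ins} to answering all subtree-vs-path queries issued by \incremain{} in $O(n)$ total time, using the Duan--Zhang partition plus fractional cascading --- but the central amortization step is asserted rather than proved, and as stated it does not work. Fractional cascading answers a \emph{bundle} of $k$ lookups along a connected walk in the catalog graph in $O(k+\log m)$ time; every separate traversal pays its own $\Omega(\log n)$ entry cost. \incremain{} can issue $\Omega(n)$ distinct path queries (the paper's chain example: $T$ a path with no back edges forces one query per vertex, each on a path of constant length), so ``amortized $O(1)$ per \pathq{} because the pieces of $p$ are visited monotonically'' still leaves you with $\Omega(n)$ separate cascading traversals and hence $\Omega(n\log n)$ total time. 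The paper needs three distinct mechanisms to get to $O(n)$: (a) explicitly precomputed answers $Q(T(v),u,\mathit{par}(v))$ for all pairs with $u$ at most $2\log n$ hops above $v$, so that every query whose path avoids the marked set $M$ costs $O(1)$ with no cascading at all; (b) one fractional cascading structure per marked vertex $s\in M$, built over the adjacency lists of the vertices in $M_s$, with the guarantee that each such structure is entered \emph{at most once} during the entire run, so the total entry cost is $O(n+|M|\log n)=O(n)$; and (c) a separate combinatorial argument (Lemma~\ref{struct-large-tree} and the tree $T[X]$) showing that only $O(n/\log n)$ queries involve a subtree of size $\ge\log n$ hanging below a marked vertex, so those can afford an individual $O(\log n)$ range-successor lookup. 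Your proposal contains none of these, and without them the $O(n)$ bound does not follow.

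There are also two structural problems with the data structure itself. First, Lemma~\ref{tree_partition_lem} produces a \emph{single-level} partition, not a hierarchical one; your claims that every root-to-node path crosses a bounded number of pieces and that each edge appears in $O(\log n)$ piece lists both presuppose a logarithmic-depth hierarchy that the cited lemma does not supply (with $k=\log n$ a root-to-leaf path can cross $\Theta(n/\log n)$ pieces). Second, storing for each piece $Q$ and vertex $u$ the neighbors of $u$ inside $Q$ sorted by depth does not let you test whether $u$ has a neighbor in a query subtree $S$: membership in $S=T(w)$ is an interval condition in DFS order, not in depth. The paper avoids this by orienting the lists the other way --- for each $v$ in a piece it sorts $N(v)$ by depth of the \emph{ancestor} endpoint, so the query ``highest neighbor of $v$ below $u$'' is a genuine one-dimensional successor search, and the two-dimensional case is delegated to the range-tree structure of Section~\ref{sec:logn}.
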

	
	By the above theorem combined with a de-amortization trick in~\cite{baswana2016dynamic}, we establish the following corollary for maintaining a DFS tree in an undirected graph with incremental updates.
	
	%\begin{corollary}[\textbf{Fault Tolerant DFS tree}]
	%	Given a set of $k$ failed vertices or edges, a DFS tree of the resulted graph can be reported in $O(nk\log^2n)$ time.
	%\end{corollary}
		
	%\begin{corollary}[\textbf{Fully Dynamic DFS tree}]
	%	Given a sequence of online edge/vertex updates, a DFS tree can be maintained in $O(\sqrt{mn} \log^{1.5} n)$ worst case time per update.
	%\end{corollary}
		
	\begin{corollary}[\textbf{Incremental DFS tree}]\label{cor-incre-dfs}
		Given a sequence of online edge/vertex insertions, a DFS tree can be maintained in $O(n)$ worst case time per insertion.
	\end{corollary}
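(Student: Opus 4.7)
My plan is to de-amortize Theorem~\ref{batch-ins} by running two copies of the data structure in rotation, reusing the general strategy outlined by Baswana et al.~\cite{baswana2016dynamic}. At any moment I maintain an \emph{active} instance $\data$ that was built for an earlier snapshot $G_{t_0}$ of the graph, together with the list $U$ of insertions that have arrived since time $t_0$; in parallel, I maintain a \emph{shadow} instance $\data'$ that is being reconstructed in the background for a more recent snapshot.

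When a new insertion arrives at time $t$, I first append it to $U$ and then invoke the batch procedure of Theorem~\ref{batch-ins} on $(\data, U)$, which outputs a DFS tree of $G + U$ in $O(n + |U|)$ time. Simultaneously, I advance the background construction of $\data'$ by a fixed $\Theta(n)$ units of work. Because the preprocessing cost of Theorem~\ref{batch-ins} is $O(\min\{m \log n, n^2\}) = O(n^2)$, this background thread finishes within $\Theta(n)$ insertions of when it started. As soon as $\data'$ is ready, I promote it to the new active instance---carrying the insertions that arrived during its construction over as the fresh $U$---and immediately launch a new background rebuild. Tuning the epoch length to $\Theta(n)$ keeps $|U|$ bounded by $O(n)$, so the per-insertion cost decomposes into $O(n + |U|) = O(n)$ for the query and $O(n)$ for the background step.

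The main obstacle is arranging the hand-off so that the invariants survive each swap: the freshly promoted $\data'$ reflects the graph at the start of its construction, while the ``true'' current graph has since undergone $\Theta(n)$ further insertions, and these must be folded into the new batch $U$ without letting $|U|$ drift above $O(n)$. A secondary annoyance is that $n$ itself grows under vertex insertions, but the standard remedy---doubling the time budget (and restarting the shadow build) whenever $n$ doubles---contributes only a geometric series that is absorbed into the $O(n)$ worst case bound. With these bookkeeping details in place, the corollary follows directly from Theorem~\ref{batch-ins}.
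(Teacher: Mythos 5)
Your proposal is correct and is essentially the paper's argument: the paper simply invokes the de-amortization lemma of Baswana et al.\ (Lemma~\ref{deamortization}) with $f = \min\{m\log n, n^2\}$, $g = 1$, $h = n$, and your two-instance rotation with $\Theta(n)$ background work per insertion and epochs of length $\Theta(n)$ is exactly the proof of that lemma specialized to these parameters. The only difference is that you unpack the black box (and additionally note the growth of $n$ under vertex insertions), which the paper leaves to the cited lemma.
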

	
	%In the same way as in~\cite{baswana2016dynamic}, our results imply algorithms for dynamic subgraph connectivity, biconnectivity and 2-edge connectivity, with $O(\sqrt{nm} \log^{1.5} n)$ worst case update time and $O(1)$ query time, improving the $O(\sqrt{nm} \log^{2.5} n)$ update time in~\cite{baswana2016dynamic} by a $\log n$ factor.

\subsection{Organization of the Paper}

In Section~2 we introduce frequently used notations and review two building blocks of our algorithm --- the tree partition structure \cite{duan2016improved} and the fractional cascading technique \cite{chazelle1986fractional,chazelle1986fractional2}.
In Section~3, we consider a batched version of the incremental setting, where all incremental updates are given at once, after which a single DFS tree is to be reported. Given an efficient scheme to answer queries of form $Q(T(\cdot), \cdot, \cdot)$, we prove Theorem~\ref{batch-ins}, which essentially says there is an efficient algorithm, which we call \incremain{}, for the batched incremental setting.
In Section~4, we elaborate on the implementation of the central query subroutine $Q(T(\cdot), \cdot, \cdot)$ used in the batched incremental algorithm.
We first review a standard de-amortization technique, applying which our algorithm for the batched setting directly implies the efficient algorithm for the incremental setting stated in Corollary~\ref{cor-incre-dfs}.
We then, in Sections~\ref{sec:logn}~and~\ref{sec:nsquare} respectively, introduce (1) an optimized data structure that takes $O(m \log n)$ time for preprocessing and answers each query in $O(\log n)$ time, and (2) a relatively simple data structure that takes $O(n^2)$ time for preprocessing and answers each query in $O(1)$ time.
One of these two structures, depending on whether $m \log n > n^2$ or not, is then used in Section~\ref{sec:mlogn} to implement a scheme that answers each query in amortized $O(1)$ time.
This is straightforward when the $(n^2, 1)$ structure is used.
When instead the $(m \log n, \log n)$ structure is used, we apply a nontrivial combination of the tree partition structure and the fractional cascading technique to bundle queries together, and answer each bundle using a single call to the $(m \log n, \log n)$ structure.
We show that the number of such bundles from queries made by \incremain{} cannot exceed $O(n / \log n)$, so the total time needed for queries is $O(n)$.
This finishes the proof of Theorem~\ref{batch-ins} and Corollary~\ref{cor-incre-dfs} and concludes the paper.

%!TEX root = main.tex
\section{Preliminaries}

Let $G = (V, E)$ denote the original graph, $T$ a corresponding DFS tree, and $U$ a set of inserted vertices and edges. We first introduce necessary notations.
\begin{itemize}
    \item $T(x)$: The subtree of $T$ rooted at $x$.
    \item $\mathit{path}(x, y)$: The path from $x$ to $y$ in $T$.
    \item $\mathit{par}(v)$: The parent of $v$ in $T$.
    \item $N(x)$: The adjacency list of $x$ in $G$.
    \item $L(x)$: The reduced adjacency list for vertex $x$, which is maintained during the algorithm.
    \item $T^*$: The newly generated DFS tree.
    \item $\mathit{par}^*(v)$: The parent of $v$ in $T^*$.
\end{itemize}

Our result uses a tree partition lemma in \cite{duan2016improved} and the famous fractional cascading structure in \cite{chazelle1986fractional,chazelle1986fractional2}, which are summarized as the following two lemmas. 

\begin{lemma}[Tree partition structure \cite{duan2016improved}]
\label{tree_partition_lem}
Given a rooted tree $T$ and any integer parameter $k$ such that $2 \le k \le n = |V(T)|$, there exists a subset of vertices $M \subseteq V(T)$, $|M|\le 3n/k-5$, such that after removing all vertices in $M$, the tree $T$ is partitioned into sub-trees of size at most $k$. We call every $v \in M$ an $M$-marked vertex, and $M$ a marked set. Also, such $M$ can be computed in $O(n \log n)$ time.
\end{lemma}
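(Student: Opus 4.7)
I would prove the lemma by describing and analyzing a bottom-up greedy marking algorithm. Root $T$ at its given root and perform a post-order traversal, maintaining at each vertex $v$ a counter $c(v)$ that records the size of the connected component currently containing $v$ in $T(v)\setminus M$. Concretely, set $c(v)=1$ for a leaf, and $c(v)=1+\sum_{u}c(u)$ for an internal $v$, where the sum ranges over the unmarked children $u$ of $v$. If the freshly computed $c(v)$ exceeds $k$, add $v$ to $M$ and reset $c(v)=0$; the pending components hanging off the unmarked children of $v$ are then declared final pieces of the partition, and by induction each such piece has size at most $k$. After processing the root, any remaining pending component is similarly finalized. Correctness (every resulting piece is a connected subtree of size $\le k$) is immediate from this invariant, since each piece is the pending component of some vertex $u$ at the moment $u$'s growth stopped, and at that moment $c(u)\le k$.

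\textbf{Bounding $|M|$.} The main step is a charging argument. Each time a vertex $v$ is added to $M$ we had $c(v)>k$, so the unmarked children's pending components collectively contain more than $k-1$ vertices, all of which belong to disjoint final pieces that can never be re-absorbed. Summing over all markings gives $|M|\cdot(k-1) < n-|M|$, hence $|M|<n/k$, which is already stronger than the claimed $3n/k-5$ in the regime of small $k$. I expect the main obstacle will be matching the precise constant $3$ and the additive slack $-5$ in the advertised bound: the naive charge above loses an additive constant near the root and is wasteful whenever a marked vertex has only a single unmarked child contributing the bulk of $c(v)$. The Duan--Zhang refinement likely pairs each marked vertex with a finalized piece of size at least $k/3$ via a case analysis on the number of unmarked children at the moment of marking, which together with a boundary correction at the root yields exactly the stated $3n/k-5$.

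\textbf{Running time.} The post-order traversal visits each vertex a constant number of times and performs $O(1)$ arithmetic work per vertex, giving an $O(n)$ implementation of the core algorithm. The extra $O(\log n)$ factor in the stated runtime accommodates auxiliary bookkeeping --- for instance, priority queues over children when balancing piece sizes in the refined version, or preliminary sorts used to record the structural information (piece boundaries and marked ancestors) that downstream applications such as the DFS-maintenance algorithm will exploit --- bringing the total to $O(n\log n)$ as claimed.
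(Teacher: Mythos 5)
The paper does not prove this lemma at all --- it is imported verbatim from Duan and Zhang \cite{duan2016improved} and used as a black box --- so there is no in-paper proof to compare yours against. Judged on its own merits, your bottom-up greedy argument is sound and essentially standard: the invariant that every finalized piece is the pending component of some vertex at the moment its growth stopped correctly gives connected pieces of size at most $k$, and your charging argument is valid (in fact you can tighten it slightly: each marking of $v$ accounts for the $\ge k$ vertices in the finalized pieces below $v$ \emph{plus} $v$ itself, giving $|M| \le n/(k+1)$). This already yields everything the present paper actually uses, since the only invocation is with $k = \log n$ and only the asymptotic bound $|M| = O(n/\log n)$ matters; your $O(n)$ running time likewise subsumes the stated $O(n\log n)$. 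The one honest gap, which you flag yourself, is that you do not derive the specific constant $3n/k - 5$; note that this exact expression cannot be taken entirely at face value anyway (for $k$ close to $n$ it becomes negative, e.g.\ a path with $k = 2n/3$ forces $|M| \ge 1 > 3n/k - 5$), so recovering it would require consulting the precise formulation in \cite{duan2016improved} rather than the paraphrase given here. In short: your proof establishes the lemma in the form the paper needs, differs from the paper only in that the paper offers no proof, and leaves open only the cosmetic matter of the advertised constant.
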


\begin{lemma}[Fractional cascading \cite{chazelle1986fractional,chazelle1986fractional2}]
\label{fractional_cascading}
Given $k$ sorted arrays $\{A_i\}_{i \in [k]}$ of integers with total size $\sum_{i=1}^k |A_i| = m$.
There exists a data structure which can be built in $O(m)$ time and using $O(m)$ space,
such that for any integer $x$,
the successors of $x$ in all $A_i$'s can be found in $O(k + \log m)$ time.
\end{lemma}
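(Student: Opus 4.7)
The plan is to implement the classical Chazelle--Guibas construction, specialized to the case when the $k$ arrays are arranged in a linear chain $A_1, A_2, \dots, A_k$. The idea is to augment each array with sampled elements from the next one so that, after one genuine binary search at the top, successive lookups can be performed by following precomputed ``bridge'' pointers in constant time per array.

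Concretely, I would build augmented arrays $A_k', A_{k-1}', \dots, A_1'$ from the bottom up. Set $A_k' := A_k$, and for $i < k$ let $A_i'$ be the sorted merge of $A_i$ with every other element of $A_{i+1}'$ (say, those at even positions). Each element of $A_i'$ is tagged with its origin (an entry of $A_i$ or a sampled copy from $A_{i+1}'$), and stores two pointers: one to the nearest element of the same or smaller value in the original $A_i$, and one, called the \emph{bridge}, to the nearest element of the same or smaller value in $A_{i+1}'$. The recurrence $|A_i'| \le |A_i| + |A_{i+1}'|/2$ telescopes into $\sum_i |A_i'| \le 2 \sum_i |A_i| = O(m)$, so both the space and the preprocessing time are $O(m)$; the merges and pointer setup are done by a single bottom-up sweep.

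For a query with key $x$, I would first binary-search $A_1'$ for the successor $p_1$ of $x$ in $O(\log m)$ time. From $p_1$ the successor of $x$ in $A_1$ is read off via the stored pointer (walking at most one step). To move from $A_i'$ to $A_{i+1}'$, I follow the bridge pointer at $p_i$: because every other element of $A_{i+1}'$ is a sampled copy already in $A_i'$, the true successor $p_{i+1}$ of $x$ in $A_{i+1}'$ lies within a constant-size window of where the bridge lands, and can be found by scanning $O(1)$ elements; again the pointer to $A_{i+1}$ yields the successor in the original array. Summing, the total time is $O(\log m)$ for the initial search plus $O(1)$ per array, giving the advertised $O(k + \log m)$ bound.

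The main delicate step will be verifying the invariant that a constant-size scan after each bridge-follow suffices to recover the true successor; this hinges on the precise choice of what ``every other element'' means and on keeping the bridge pointers consistent with the truncation at the last array $A_k' = A_k$ (which has no sampling into it). Once that invariant is nailed down, the size bound and query analysis are straightforward consequences of the geometric decay of $|A_i'|$ relative to $|A_{i+1}'|$ and the single $O(\log m)$ binary search at the root.
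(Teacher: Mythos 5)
Your sketch is the standard Chazelle--Guibas fractional cascading construction for a chain of catalogs (geometric sampling into augmented lists, bridge pointers, one binary search at the top followed by $O(1)$ work per list), which is precisely the construction the paper invokes: the paper states Lemma~\ref{fractional_cascading} as a black box citing \cite{chazelle1986fractional,chazelle1986fractional2} and gives no proof of its own. The size recurrence, the $O(m)$ preprocessing, and the $O(k+\log m)$ query bound in your proposal are all correct, and the one point you flag as delicate (a constant-width window between consecutive sampled bridges) is indeed the standard invariant that makes the per-list step $O(1)$.
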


%!TEX root = main.tex
\section{Handling batch insertions}

        In this section, we study the dynamic DFS tree problem in the batch insertion setting.
        The goal of this section is to prove Theorem \ref{batch-ins}.
        Our algorithm basically follows the same framework for fully dynamic DFS proposed in \cite{baswana2016dynamic}.
        Since we are only interested in the dynamic DFS tree problem in the batch insertion setting, the algorithms \textsf{BatchInsert} and \textsf{DFS} presented below is a moderate simplification of the original algorithm in \cite{baswana2016dynamic}, by directly pruning those details unrelated to insertions.

        \begin{algorithm}[H]
            \caption{\incremain}
            \KwData{a DFS tree $T$ of $G$, set of insertions $U$}
            \KwResult{a DFS tree $T^*$ of $G + U$}
            Add each inserted vertex $v$ into $T$, set $\mathit{par}(v) = r$\;
            Initialize $L(v)$ to be $\emptyset$ for each $v$\;
            Add each inserted edge $(u, v)$ to $L(u)$ and $L(v)$\;
            Call $\mathrm{\dfs}(r)$\;
        \end{algorithm}
        \begin{algorithm}[H]
            \caption{\dfs}
            \KwData{a DFS tree $T$ of $G$, the entering vertex $v$}
            \KwResult{a partial DFS tree}
            Let $u = v$\;
            \While{$\mathit{par}(u)$ is not visited} {
                Let $u = \mathit{par}(u)$\;
            }
            Mark $\mathit{path}(u, v)$ to be visited\;
            Let $(w_1, \dots, w_t) = \mathit{path}(u, v)$\;
            \For{$i \in [t]$} {
                \If{$i \ne t$} {
                    Let $\mathit{par}^*(w_i) = w_{i + 1}$\;
                }
                \For{child $x$ of $w_i$ in $T$ except $w_{i + 1}$} {
                    Let $(y, z) = Q(T(x), u, v)$, where $y \in \mathit{path}(u, v)$\;
                    Add $z$ into $L(y)$\;
                }
            }
            \For{$i \in [t]$} {
                \For{$x \in L(w_i)$} {
                    \If{$x$ is not visited} {
                        Let $\mathit{par}^*(x) = w_i$\;
                        Call $\mathrm{\dfs}(x)$\;
                    }
                }
            }
        \end{algorithm}

        In Algorithm \incremain, we first attach each inserted vertex to the super root $r$, and pretend it has been there since the very beginning. Then only edge insertions are to be considered. All inserted edges are added into the reduced adjacency lists of corresponding vertices. We then use \dfs{}  to traverse the graph starting from $r$ based on $T$, $L$, and build the new DFS tree while traversing the entire graph and updating the reduced adjacency lists.

        In Algorithm \textsf{DFS}, the new DFS tree is built in a recursive fashion. Every time we enter an untouched subtree, say $T(u)$, from vertex $v \in T(u)$, we change the root of $T(u)$ to $v$ and go through $\mathit{path}(v, u)$; i.e., we wish to reverse the order of $\mathit{path}(u, v)$ in $T^*$. One crucial step behind this operation is that we need to find a new root for each subtree $T(w)$ originally hanging on $\mathit{path}(u, v)$. The following lemma tells us where the $T(w)$ should be rerooted on $\mathit{path}(u, v)$ in $T^*$.
                
        \begin{lemma}[\cite{baswana2016dynamic}]
        	\label{feasible_edge}
        	Let $T^*$ be a partially constructed DFS tree, $v$ the current vertex being visited, $w$ an (not necessarily proper) ancestor of $v$ in tree $T^*$, and $C$ a connected component of the subgraph induced by unvisited vertices. If there are two edges $e$ and $e'$ from $C$ incident on $v$ and $w$, then it is sufficient to consider only $e$ during the rest of the DFS traversal.       
        \end{lemma}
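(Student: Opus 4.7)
The plan is a standard DFS exchange argument. Write $e = (v, c_1)$ and $e' = (w, c_2)$ with $c_1, c_2 \in C$ (possibly equal), and exhibit a tie-breaking rule that always prefers $e$ over $e'$ when extending the DFS into $C$; I then want to verify that this rule still yields a valid DFS tree of $G + U$, so that $e'$ can safely be discarded throughout the remainder of the traversal.

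First I would make the ``prioritize $e$'' rule precise: whenever the DFS later considers extending from $v$ while $C$ is still unvisited, it must recurse along $e$ before any other edge into $C$. Because $C$ is a connected component of the subgraph induced by \emph{currently unvisited} vertices, and because DFS cannot cross between distinct components of that subgraph without first visiting them, exploring any other component from $v$ in the meantime leaves the vertices and edges of $C$ unchanged. Hence without loss of generality, when DFS first enters $C$, it does so through $e$.

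Next I would analyze the recursive call launched along $e$ from $c_1$. Since $C$ is connected in the unvisited subgraph, that single recursive call visits every vertex of $C$ before returning to $v$, and in the resulting $T^*$ the component $C$ becomes a subtree rooted at $c_1$. In particular, $c_2$ ends up as a descendant of $c_1$, hence of $v$, hence of $w$. Now the standard fact about undirected DFS trees applies: every non-tree edge joins an ancestor with a descendant, so $e' = (w, c_2)$ is a legitimate back edge. Thus $e'$ is never needed as a tree edge, and an induction on the remaining DFS steps shows that the same holds at every future moment in the traversal, completing the proof.

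The main subtlety, rather than a technical obstacle, is pinning down exactly what ``sufficient to consider only $e$'' licenses inside Algorithm \dfs{}: the lemma justifies that the query $Q(T(x), u, v)$ only needs to return a \emph{single} edge from $T(x)$ to $\mathit{path}(u, v)$ — the one whose endpoint $y$ is deepest in $T^*$ — because any shallower feasible edge plays the role of $e'$ and automatically becomes a back edge the moment $T(x)$ is entered through the deepest-endpoint edge. This reduction from ``all feasible edges'' to ``one per hanging subtree'' is what ultimately underlies the $O(n + k)$ bound claimed in Theorem~\ref{batch-ins}.
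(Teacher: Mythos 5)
The paper states Lemma~\ref{feasible_edge} without proof, importing it directly from Baswana et al.~\cite{baswana2016dynamic}, so there is no in-paper argument to compare against; your reconstruction is the standard exchange argument underlying that cited result and is essentially correct. The key steps are all present: since $C$ is a connected component of the unvisited subgraph, no vertex explored outside $C$ can acquire an edge into $C$, so $C$ is first entered from $v$ (via $e$, under your tie-breaking rule) and is then consumed entirely within that one recursive call, making $c_2$ a descendant of $w$ and turning $e'$ into a legitimate ancestor--descendant back edge that the traversal never needs.
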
        
        
        Let $Q(T(w), u, v)$ be the edge between the highest vertex on $\mathit{path}(u, v)$ incident to a vertex in subtree $T(w)$, and the corresponding vertex in $T(w)$.
        $Q(T(w), u, v)$ is defined to be $\mathsf{Null}$ if such an edge does not exist.
        By Lemma \ref{feasible_edge}, it suffices to ignore all other edges but just keep the edge returned by $Q(T(w), u, v)$; this is because we have reversed the order of $\mathit{path}(u, v)$ in $T^*$ and thus $Q(T(w), u, v)$ connects to the lowest possible position in $T^*$. Hence $T(w)$ should be rerooted at $Q(T(w), u, v)$.
        
        Denote $(x, y)$ to be the edge returned by $Q(T(w), u, v)$ where $x \in \mathit{path}(u, v)$, and then we add $y$ into $L(x)$. After finding an appropriate entering edge for each hanging subtree, we process each vertex $v \in \mathit{path}(u, v)$ in ascending order of depth (with respect to tree $T$). For every unvisited $w \in L(v)$, we set $\mathit{par}^*(w) = v$, and recursively call $\mathrm{\dfs}(w)$.

        \begin{theorem}
            \incremain{} correctly reports a feasible DFS tree $T^*$ of graph $G + U$.
        \end{theorem}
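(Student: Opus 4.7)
The plan is to prove correctness by strong induction on the sequence of \dfs{} calls made inside \incremain{}, maintaining the invariant that, at the moment \dfs{}$(v)$ is invoked, the unvisited vertices of $G+U$ partition into connected components which are exactly subtrees $T(u')$ of the extended tree (the original $T$ with every inserted vertex attached to the super root $r$), and that $v$ lies in one such subtree $T(u)$ with $u$ being the topmost unvisited ancestor of $v$. The base case is immediate: when \incremain{} calls \dfs{}$(r)$, nothing has been visited and the single component is $T(r)$ itself.

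For the inductive step, I would first check that the while-loop correctly locates $u$ as the topmost unvisited ancestor, so that $\mathit{path}(u, v) = (w_1, \ldots, w_t)$ is the chain appended to $T^*$ in reverse via $\mathit{par}^*(w_i) = w_{i+1}$. For each subtree $T(x)$ hanging off this chain (with $x$ a child in $T$ of some $w_i$, $x\neq w_{i+1}$), I would invoke Lemma~\ref{feasible_edge} with $C = T(x)$: the reversal of the path in $T^*$ means that the highest vertex $y$ of $\mathit{path}(u,v)$ in $T$ is the deepest ancestor of $v$ in $T^*$ among $\mathit{path}(u,v)$, so $Q(T(x), u, v) = (y, z)$ picks exactly the edge whose retention Lemma~\ref{feasible_edge} certifies as sufficient; all other edges between $T(x)$ and $\mathit{path}(u,v)$ therefore become safe back edges in $T^*$ and may be discarded. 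Inserting $z$ into $L(y)$ schedules $T(x)$ to be re-rooted at $z$ when \dfs{}$(z)$ is later invoked through $L(y)$. Combined with the fact that after marking $\mathit{path}(u,v)$ visited the remaining unvisited part of $T(u)$ decomposes into exactly those hanging subtrees, this preserves the invariant for every subsequent recursive call.

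From here I would conclude that $T^*$ is a valid DFS tree through two checks. Spanning: every vertex is eventually visited because every inserted vertex is attached to $r$ at initialization, every component of the unvisited graph is entered through some $Q$-edge placed into $L(\cdot)$ during the enclosing \dfs{} call, and $\mathit{par}^*$ is assigned exactly once per vertex (either by the path assignment along $(w_1,\ldots,w_t)$ or by the recursive descent from some $L(w_i)$). No cross edges: every non-tree edge of $G + U$ falls into one of three cases, namely edges with both endpoints on a single $\mathit{path}(u, v)$ (back edges along the reversed chain), edges between $\mathit{path}(u,v)$ and a hanging subtree (handled directly by Lemma~\ref{feasible_edge}), or edges entirely inside some hanging subtree (handled by the inductive hypothesis applied to the subsequent recursive call on that subtree).

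The main delicate point, and the step I would spell out most carefully, is the orientation swap inside the re-rooting argument: $Q$ returns the highest endpoint on $\mathit{path}(u,v)$ with respect to $T$, whereas Lemma~\ref{feasible_edge} phrases soundness in terms of the deepest ancestor in $T^*$. Verifying that the in-place reversal of $\mathit{path}(u,v)$ exactly exchanges these two notions, so that the $Q$-edge corresponds precisely to the edge which Lemma~\ref{feasible_edge} allows us to single out, is what makes the entire induction go through; once this correspondence is explicit, the rest of the argument is routine bookkeeping over the three categories of non-tree edges.
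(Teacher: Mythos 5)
Your proposal follows essentially the same route as the paper's own proof: the key invariant that the entire subtree $T(u)$ below the topmost unvisited ancestor is still untouched when $\mathrm{\dfs}(v)$ is entered, the application of Lemma~\ref{feasible_edge} to each hanging subtree after noting that reversing $\mathit{path}(u,v)$ in $T^*$ turns the highest endpoint in $T$ into the deepest ancestor in $T^*$, and the conclusion that every vertex is assigned exactly one parent. One minor caveat: the unvisited vertices decompose into full subtrees of $T$, but these need not coincide with connected components of the unvisited subgraph of $G+U$ (inserted edges may join several such subtrees into one component), so Lemma~\ref{feasible_edge} should be invoked for the component containing $T(x)$ rather than literally with $C=T(x)$ --- this does not affect the conclusion, and likewise your three-way classification of non-tree edges should also account for inserted edges joining two distinct unvisited subtrees, which the algorithm handles through the initial population of the lists $L(\cdot)$.
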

        \begin{proof}
            We argue that in a single call $\mathrm{\dfs}(v)$, where $u$ is the highest unvisited ancestor of $v$, every unvisited (at the moment of being enumerated) subtree $T(w)$ hanging from $\mathit{path}(u, v)$, as well as every vertex on $\mathit{path}(u, v)$ except $v$, will be assigned an appropriate parent such that these parent-child relationships constitute a DFS tree of $G$ at the termination of \incremain{}. When the traversal reaches $v$, the entire $T(u)$ is untouched, or else $u$ would have been marked by a previous visit to some vertex in $T(u)$. We could therefore choose to go through $\mathit{path}(v, u)$ to reach $u$ first. By Lemma~\ref{feasible_edge}, if a subtree $T(w)$ is reached from some vertex on $\mathit{path}(u, v)$, it suffices to consider only the edge $Q(T(w), u, v)$. After adding the query results of all hanging subtrees into the adjacency lists of vertices on $\mathit{path}(u, v)$, every hanging subtree visited from some vertex $x$ on $\mathit{path}(u, v)$ should be visited in a correct way through edges in $L(x)$ solely. Since every vertex will eventually be assigned a parent, \incremain{} does report a feasible DFS tree of graph $G + U$.
        \end{proof}
        
	For now we have not discussed how to implement $Q(T(w), u, v)$ and the above algorithm only assumes blackbox queries to $Q(T(\cdot), \cdot, \cdot)$. The remaining problem is to devise a data structure $\mathcal{D}$ to answer all the queries demanded by Algorithm \textsf{DFS} in $O(n)$ total time. We will show in the next section that there exists a data structure $\mathcal{D}$ with the desired performance, which is stated as the following lemma.
        \begin{lemma}
        \label{query_time}
            There exists a data structure $\mathcal{D}$ with preprocessing time $O\left(\min\{m \log n, n^2\}\right)$ time and space complexity $O\left( \min \{ m \log n, n^2 \} \right)$ that can answer all queries $Q(T(w), x, y)$  in a single run of \incremain{} in $O(n)$ time. 
        \end{lemma}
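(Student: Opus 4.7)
My plan starts from the observation that in a single run of \incremain{} the routine $Q$ is invoked at most once for each vertex of $T$: each $x$ appears as a hanging-subtree root off the currently processed path in exactly one call to \dfs{}. So the total query count is $O(n)$, and any scheme delivering amortized $O(1)$ time per query proves the lemma. I would split the argument into two cases according to the preprocessing budget.

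When $m\log n \ge n^2$, I would build an $O(n^2)$-size table $\mathcal{T}[x,u]$ indexed by pairs where $u$ is an ancestor of $x$, storing an edge that realises $Q(T(x),u,\cdot)$ --- an edge from $T(x)$ to the highest ancestor of $x$ at depth at least $\mathrm{depth}(u)$. A bottom-up pass merges each $x$'s children's rows with its own incident edges, filling the table in $O(n^2)$ time, after which every query is an $O(1)$ lookup and the total query cost is $O(n)$.

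When $m\log n<n^2$, I would build the $(m\log n,\log n)$ structure promised by Section~\ref{sec:logn}, which already answers each individual $Q$ in $O(\log n)$ time, and the remaining $\log n$ saving comes from bundling. Applying Lemma~\ref{tree_partition_lem} to $T$ with $k=\log n$ I obtain a marked set $M$ of size $O(n/\log n)$ and micro-components of size at most $\log n$. I then batch the $O(n)$ queries into groups so that each group can be answered by a single call to the $(m\log n,\log n)$ structure together with one fractional cascading cascade (Lemma~\ref{fractional_cascading}). The cascade catalogs are prebuilt per micro-component over the sorted lists that the structure would otherwise consult separately for each query. The crucial claim, to be proved by a careful charging argument, is that the number of groups does not exceed $O(n/\log n)$, so that $O(n/\log n)$ groups at $O(\log n)$ apiece sum to $O(n)$.

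The main obstacle I foresee lies precisely in that charging. The queries inside a single group share the micro-component structure but are fired at different moments of the online \dfs{} traversal and with different path arguments $(u,v)$, so the reduction to a single structure call cannot rely on a common search key. My plan is to let each catalog precompute the successor layout of its sorted lists at preprocessing time, then pay $O(\log n)$ once per group at query time and amortise this against the up-to-$\log n$ queries in the group. The bundle count itself should follow by double counting: each group is charged to either a vertex of $M$ or to a micro-component whose at-most-$\log n$ vertices have exhausted their one-query-per-vertex allowance, and both $|M|$ and the number of such saturated micro-components are $O(n/\log n)$.
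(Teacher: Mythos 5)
Your high-level ingredients match the paper's --- the $O(n^2)$ lookup table when $m\log n \ge n^2$, and otherwise the $O(\log n)$-per-query range-successor structure combined with the tree partition (Lemma~\ref{tree_partition_lem} with $k=\log n$) and fractional cascading --- and the dense case is fine. But the sparse case has a genuine gap in the amortization. Your bundle count claim fails: on a chain with no back edges, \dfs{} is re-invoked at every vertex, each invocation's $\mathit{path}(u,v)$ is a single node making one query, so there are $\Omega(n)$ groups no matter how you batch within a path, and a group of size $1$ cannot absorb an $O(\log n)$ cascade call. Nor can you merge groups across invocations: queries made at different moments carry different upper endpoints $u$, and a fractional cascading search needs one common key to cascade through its catalogs. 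Moreover the number of micro-components (and of ``saturated'' ones) is not $O(n/\log n)$ in general --- removing a single high-degree marked vertex can create $\Omega(n)$ components --- so the proposed charging does not close. The paper escapes all of this with an ingredient you omit: a precomputed table of $Q(T(v),u,\mathit{par}(v))$ for every pair with $u$ at most $2\log n$ hops above $v$ ($O(m\log n)$ preprocessing, $O(1)$ per lookup), which handles every query whose path avoids $M$ (such paths are short by Lemma~\ref{struct-marked}) as well as the prefix of a long path above its first marked vertex; the cascade is reserved for the portion of the path at or below marked vertices, queried once per marked vertex with the shared key $u$, for total cost $O(\sum_{s\in M}(|M_s|+\log n))=O(n)$.

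You also do not address large hanging subtrees. A subtree $T(x)$ with $|T(x)|\ge \log n$ intersects $M$ and hence is not contained in any single catalog $M_s$, so the cascade cannot answer $Q(T(x),u,v)$; the paper falls back to an individual $O(\log n)$ range-successor query for each such $x$ and must prove there are only $O(n/\log n)$ of them. That count is not a static fact about $T$: it relies on Lemma~\ref{struct-large-tree}, which exploits the dynamics of the traversal (two nested such subtrees must have a marked vertex between their roots, since otherwise $x_1$ and $\mathit{par}(x_1)$ would be visited in the same invocation), followed by a leaf/branching count on the auxiliary tree $T[X]$. Without this piece, and without the short-path table, the proposal does not reach $O(n)$ total query time.
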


        \begin{proof}[Proof of Theorem \ref{batch-ins}]
            By Lemma~\ref{query_time}, the total time required to answer queries is $O(n)$. The total size of reduced adjacency lists is bounded by $O(n + |U|)$, composed by $O(|U|)$ edges added in \incremain{} and $O(n)$ added during DFS. Thus, the total time complexity of $\incremain$ is $O(n + |U|)$.

            During preprocessing, we use depth first search on $G$ to get the initial DFS tree $T$, and build $\data$ in time $O\left( \min \{ m \log n, n^2 \} \right)$. The total time for preprocessing is $O\left( \min \{ m \log n, n^2 \} \right)$.
        \end{proof}

%!TEX root = main.tex
     \section{Dealing with queries in \incremain}
    In this section we prove Lemma~\ref{query_time}. Once this goal is achieved, the overall time complexity of batch insertion taken by Algorithm~\incremain{} would be $O(n + |U|)$.
    
	In the following part of this section, we will first devise a data structure in Section \ref{sec:logn}, that answers any single query $Q(T(w), u, v)$ in $O(\log n)$ time, which would be useful in other parts of the algorithm. We will then present another simple data structure in Section \ref{sec:nsquare}, which requires $O(n^2)$ preprocessing time and $O(n^2)$ space and answers each query in $O(1)$ time. Finally, we propose a more sophisticated data structure in Section \ref{sec:mlogn}, which requires $O(m\log n)$ preprocessing time and $O(m \log n)$ space and answer all queries $Q(T(w), x, y)$  in a single run of \incremain{} in $O(n)$ time. Hence, we can always have an algorithm that handles a batch insertion $U$ in $O(n + |U|)$ time using $O(\min\{m\log n, n^2\})$ preprocessing time and  $O(\min\{m\log n, n^2\})$ space, thus proving Theorem \ref{batch-ins}. We can then prove Corollary \ref{cor-incre-dfs} using the following standard de-amortization argument. 
        \begin{lemma}{(Lemma 6.1 in \cite{baswana2016dynamic})}
        	\label{deamortization}
        	Let $\data$ be a data structure that can be used to report the solution of a graph problem after a set of $U$ updates on an input graph $G$. If $\data$ can be initialized in $O(f)$ time and the solution for graph $G + U$ can be reported in $O(h + |U| \times g)$ time, then $\data$ can be modified to report the solution after every update in worst-case $O\left(\sqrt{fg} + h\right)$ update time after spending $O(f)$ time in initialization, given that $\sqrt{f / g} \le n$.
        \end{lemma}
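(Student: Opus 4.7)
The plan is to apply a standard de-amortization trick: keep a slightly stale snapshot of $\data$ together with a list of updates received since that snapshot, answer each reporting request by invoking the batch-mode reporter on this pair, and, in parallel, construct the next snapshot in the background at a uniform rate that spreads the $O(f)$ build cost evenly across enough updates to balance the query and background-build contributions in the worst case.

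Concretely, set $B = \lceil \sqrt{f/g}\,\rceil$, which by the hypothesis $\sqrt{f/g} \le n$ is a meaningful batch size. Partition the online update stream into consecutive \emph{phases} of $B$ updates each, and let $G_i$ denote the graph state at the end of phase $i$, with $G_0 = G$. I maintain the invariant that at the start of each phase $i$ an ``active'' data structure $\data^{\mathrm{act}}_i$ based on $G_{\max(i-2,\,0)}$ has just been completed; for $i=1$ this is simply the initial $\data$ built on $G$. During phase $i$, I keep a list $U$ of all updates issued since the snapshot of $\data^{\mathrm{act}}_i$, and after each new update I report the current solution by feeding $(\data^{\mathrm{act}}_i, U)$ to the batch-mode routine; since $|U| \le 2B$, each such report costs $O(h + |U|\cdot g) = O(h + Bg)$. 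In parallel, throughout phase $i$ I spend $O(f/B)$ time per update constructing the next structure $\data^{\mathrm{act}}_{i+1}$, whose target graph $G_{i-1}$ is fully determined and known at the start of phase $i$; the total background budget over the $B$ updates of phase $i$ is $O(f)$, so the build finishes precisely in time for phase $i+1$.

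Adding the two contributions gives a worst-case per-update cost of $O(h + Bg + f/B)$; the choice $B = \sqrt{f/g}$ balances the last two terms at $\sqrt{fg}$, yielding the desired $O(h + \sqrt{fg})$ bound. Initialization is a single call to the $O(f)$ builder for $\data^{\mathrm{act}}_1$.

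The main delicate point, and what I expect to be the only real obstacle, is pinning down which graph the background construction targets so that its input does not evolve mid-build. The ``lag by two phases'' discipline above handles this cleanly: the build launched at the start of phase $i$ is aimed at $G_{i-1}$, which stops evolving before phase $i$ begins and is therefore a static input throughout the $B$ steps of background work. The resulting factor of two in $|U| \le 2B$ is absorbed by the $O(\cdot)$, and the remainder is routine arithmetic.
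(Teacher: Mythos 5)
Your proposal is correct and is exactly the standard de-amortization argument: the paper itself gives no proof of this lemma (it is imported verbatim as Lemma 6.1 of Baswana et al.), and your phase-based scheme with batch size $B=\lceil\sqrt{f/g}\,\rceil$, a two-phase lag for the background rebuild, and the balance $Bg = f/B = \sqrt{fg}$ is the same argument given in that reference. No gaps worth flagging; the only glossed-over point is the role of the hypothesis $\sqrt{f/g}\le n$ (it guarantees the pending batch stays small enough for the batch reporter's stated bound to apply), which you acknowledge.
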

    
	\begin{proof}[Proof of Corollary \ref{cor-incre-dfs}]
		Taking $f = \min\{m\log n, n^2 \}$, $g = 1$, $h = n$ and directly applying the above lemma will yield the desired result. 
	\end{proof}
    
	\subsection{Answering a single query in $O(\log n)$ time} \label{sec:logn}
	 We show in this subsection that the query $Q(T(\cdot), \cdot, \cdot)$ can be reduced efficiently to the range successor query (see, e.g., \cite{nekrich2012sorted}, for the definition of range successor query), and show how to answer the range successor query, and thus any individual query $Q(T(\cdot), \cdot, \cdot)$, in $O(\log n)$ time.

	To deal with a query $Q(T(w), x, y)$, first note that since $T$ is a DFS tree, all edges not in $T$ but in the original graph $G$ must be ancestor-descendant edges. Querying edges between $T(w)$ and $\mathit{path}(x, y)$ where $x$ is an ancestor of $y$ and $T(w)$ is hanging from $\mathit{path}(x, y)$ is therefore equivalent to querying edges between $T(w)$ and $\mathit{path}(x, \mathit{par}(w))$, i.e., $Q(T(w), x, y) = Q(T(w), x, \mathit{par}(w))$. From now on, we will consider queries of the latter form only.
	
	Consider the DFS sequence of $T$, where the $i$-th element is the $i$-th vertex reached during the DFS on $T$. Note that every subtree $T(w)$ corresponds to an interval in the DFS sequence. Denote the index of vertex $v$ in the DFS sequence by $\mathit{first}(v)$, and the index of the last vertex in $T(v)$ by $\mathit{last}(v)$. 
	During the preprocessing, we build a 2D point set $S$. For each edge $(u, v) \in E$, we add a point  $p = (\mathit{first}(u), \mathit{first}(v))$ into $S$. Notice that for each point $p \in S$, there exists exactly one edge $(u, v)$ associated with $p$. Finally we build a 2D range tree on point set $S$ with $O(m\log n)$ space and $O(m\log n)$ preprocessing time.
	
	To answer an arbitrary query $Q(T(w), x, \mathit{par}(w))$, we query the point with minimum $x$-coordinate lying in the rectangle $\Omega = [\mathit{first}(x), \mathit{first}(w) - 1] \times [\mathit{first}(w), \mathit{last}(w)]$. If no such point exists, we return \textsf{Null} for $Q(T(w), x, \mathit{par}(w))$. Otherwise we return the edge corresponding to the point with minimum $x$-coordinate.
	
	Now we prove the correctness of our approach.
	\begin{itemize}
		\item If our method returns \textsf{Null}, $Q(T(w), x, \mathit{par}(w))$ must equal \textsf{Null}. Otherwise, suppose $Q(T(w), x, \mathit{par}(w)) = (u, v)$. Noticing that $(\mathit{first}(u), \mathit{first}(v))$ is in $\Omega$, it means our method will not return \textsf{Null} in that case.
		
		\item If our method does not return \textsf{Null}, denote $(u', v')$ to be the edge returned by our method. We can deduce from the query rectangle that $u' \in T(x) \backslash T(w)$ and $v' \in T(w)$. Thus, $Q(T(w), x, \mathit{par}(w)) \neq \textsf{Null}$. Suppose $Q(T(w), x, \mathit{par}(w)) = (u, v)$. Notice that $(\mathit{first}(u), \mathit{first}(v))$ is in $\Omega$, which means $\mathit{first}(u') \le \mathit{first}(u)$. If $u' = u$, then our method returns a feasible solution. Otherwise, from the fact that $\mathit{first}(u') < \mathit{first}(u)$, we know that $u'$ is an ancestor of $u$, which contradicts the definition of  $Q(T(w), x, \mathit{par}(w))$.
	\end{itemize}    
    
	\subsection{An $O(n^2)$-space data structure}\label{sec:nsquare}
	\label{pre_all}
	
	In this subsection we propose a data structure with quadratic preprocessing time and space complexity that answers any $Q(T(\cdot), \cdot, \cdot)$ in constant time.
	
	Since we allow quadratic space, it suffices to precompute and store answers to all possible queries $Q(T(w), u, \mathit{par}(w))$. For preprocessing, we enumerate each subtree $T(w)$, and fix the lower end of the path to be $v = \mathit{par}(w)$ while we let the upper end $u$ go upward from $v$ by one vertex at a time to calculate $Q(T(w), u, v)$ incrementally, in order to get of the form $Q(T(w), \cdot, \cdot)$ in $O(n)$ total time.
	
	As $u$ goes up, we check whether there is an edge from $T(w)$ to the new upper end $u$ in $O(1)$ time; for this task we build an array (based on the DFS sequence of $T$) for each vertex, and insert an 1 into the appropriate array for each edge, and apply the standard prefix summation trick to check whether there is an 1 in the range corresponding to $T(w)$. To be precise, let $A_u: [n] \rightarrow \{0, 1\}$ denote the array for vertex $u$. Recall that $\mathit{first}(v)$ denotes the index of vertex $v$ in the DFS sequence, and $\mathit{last}(v)$ the index of the last vertex in $T(v)$. For a vertex $u$, we set $A_u[\mathit{first}(v)]$ to be 1 if and only if there is an edge $(u, v)$ where $u$ is the higher end. Now say, we have the answer to $Q(T(w), u, v)$ already, and want to get $Q(T(w), u', v)$ in $O(1)$ time, where $u' = \mathit{par}(u)$. If there is an edge between $T(w)$ and $u'$, then it will be the answer. Or else the answer to $Q(T(w), u', v)$ will be the same as to $Q(T(w), u, v)$. In order to know whether there is an edge between $T(w)$ and $u'$, we check the range $[\mathit{first}(w), \mathit{last}(w)]$ in $A_{u'}$, and see if there is an $1$ in $O(1)$ time using the prefix summation trick. %See \all\ procedure for details.
	
	\begin{lemma}
		\label{preprocess_all}
		The preprocessing time and query time of the above data structure are $O(n^2)$ and $O(1)$ respectively.
	\end{lemma}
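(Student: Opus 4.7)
The plan is to carry out a straightforward accounting of the construction described just above the lemma. Since the query answer is just a table lookup, the $O(1)$ query time is immediate once we verify that all relevant answers $Q(T(w), u, \mathit{par}(w))$ are correctly stored. So the bulk of the proof is showing that the preprocessing fits into $O(n^2)$ time and space.

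First I would bound the total number of stored entries. For each vertex $w$, the set of relevant upper ends $u$ consists of the ancestors of $\mathit{par}(w)$ in $T$, of which there are at most $n-1$. Summing over $w$ gives at most $O(n^2)$ entries, which also bounds the table size. Next I would account for the arrays $A_u$: a priori each $A_u$ has length $n$, for a total of $n^2$ cells; allocating and zero-initializing all $A_u$ takes $O(n^2)$ time, after which each edge $(u,v)$ causes us to set one cell to $1$ in $O(m)$ additional time. Building the prefix sum of each $A_u$ takes $O(n)$ time, hence $O(n^2)$ in total; after this, each query of the form ``is there an edge from $u'$ to some vertex in $T(w)$?'' reduces to comparing two prefix sums and therefore runs in $O(1)$ time.

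Then I would analyze the incremental filling of the table. Fix $w$ and let $v=\mathit{par}(w)$. We process the ancestors of $v$ in order $u_0=v, u_1=\mathit{par}(v), u_2=\mathit{par}(u_1),\dots$, maintaining the invariant that once we reach $u_i$, the value $Q(T(w), u_i, v)$ has been computed and stored. The base case $Q(T(w), v, v)$ is just the edge $(v,w)$ of $T$ (or $\mathsf{Null}$ if no such edge other than $(v,w)$ exists). For the inductive step, using the prefix sum of $A_{u_{i+1}}$ on the range $[\mathit{first}(w), \mathit{last}(w)]$ we can in $O(1)$ time decide whether there is an edge from $u_{i+1}$ to $T(w)$, and if so look up which endpoint $z$ it connects to; otherwise $Q(T(w), u_{i+1}, v)=Q(T(w), u_i, v)$. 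This step takes $O(1)$ per ancestor $u_i$, and $O(n)$ time per subtree, so $O(n^2)$ over all $w$.

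Combining the above, the total preprocessing time is $O(n^2)$, and all future queries are answered in $O(1)$ by a single table lookup. The only mildly delicate point, and the place I would be most careful, is maintaining the actual edge endpoint $z$ (not just a boolean) when the prefix sum reports a hit; this is handled by storing, alongside each $1$ in $A_u$, the identity of the vertex in $T(w)$ to which $u$ is connected (since each cell of $A_u$ corresponds to a unique $v$ via $\mathit{first}(v)$, this information is free). Once this bookkeeping is in place, the lemma follows.
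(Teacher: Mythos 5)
Your proof is correct and follows essentially the same route as the paper's: build the arrays $A_u$ with prefix sums in $O(n^2)$ time, then fill the table of answers $Q(T(w),u,\mathit{par}(w))$ incrementally by walking each $w$'s ancestor path upward at $O(1)$ cost per step, giving $O(n^2)$ preprocessing and $O(1)$ lookup. You are in fact slightly more careful than the paper about recovering the actual edge endpoint $z$ rather than just a boolean (one still needs, e.g., a precomputed next-one-position array per $A_u$ to locate the hit in $O(1)$, but that is a trivial addition within the same budget).
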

	\begin{proof}
		The array $A_u$ and its prefix sum can be computed for each vertex $u$ in total time $O(n^2)$. For each subtree $T(w)$, we go up the path from $w$ to the root $r$, and spend $O(1)$ time for each vertex $u$ on $\mathit{path}(r, w)$ to get the answer for $Q(T(w), u, \mathit{par}(w))$. There are at most $n$ vertices on $\mathit{path}(r, w)$, so the time needed for a single subtree is $O(n)$, and that needed for all subtrees is $n \cdot O(n) = O(n^2)$ in total. On the other hand, for each query, we simply look it up and answer in $O(1)$ time. Hence we conclude that the preprocessing time and query time are $O(n^2)$ and $O(1)$ respectively.
		%Line 1 through line 4 can be done in $O(n^2)$ time. Particularly, by \cite{BV93}, line 4 takes $O(n)$ time for each $u$. Line 8 through 13 will be run for at most $n^2$ times, each taking $O(1)$ time. Particularly, again by \cite{BV93}, line 10 takes $O(1)$ time. The total preprocessing time is evidently $O(n^2)$. To answer a query, we simply look it up in the table stored during preprocessing, which takes $O(1)$ time.
	\end{proof}    
   
   \subsection{An $O(m\log n)$-space data structure}\label{sec:mlogn}
		Observe that in \incremain{} (and \dfs), a bunch of queries $\{Q(T(w_i), x, y)\}$ are always made simultaneously, where $\{T(w_i)\}$ is the set of subtrees hanging from $\mathit{path}(x, y)$. We may therefore answer all queries for a path in one pass, instead of answering them one by one. By doing so we confront two types of hard queries.

        First consider an example where the original DFS tree $T$ is a chain $L$ where $a_1$ is the root of $L$ and for $1 \le i \le n - 1$, $a_{i+1}$ is the unique child of $a_i$. When we invoke $\dfs(a_1)$ on $L$, $path(u, v)$ is the single node $a_1$. Thus, we will call $Q(T(a_2), a_1, a_1)$ and add the returned edge into $L(a_1)$. Supposing there are no back-edges in this graph, the answer of $Q(T(a_2), a_1, a_1)$ will be the edge $(a_1, a_2)$. Therefore, we will recursively call the $\dfs(a_2)$ on the chain $(a_2, a_n)$. Following further steps of \dfs, we can see that we will call the query $Q(T(w), x, y)$ for $\Omega(n)$ times. For the rest of this subsection, we will show that we can deal with this example in linear time. The idea is to answer queries involving short paths in constant time. For instance, in the example shown above, $path(u, v)$ always has constant length. We show that when the length of $path(u, v)$ is smaller than $2\log n$, it is affordable to preprocess all the answers to queries of this kind in $O(m\log n)$ time and $O(n \log n)$ space.%also we will be able to answer all recursive queries inside a small subtree in time proportional to the sizegh,  of that subtree in total. Moreover, answering direactly for subtrees of constant sizes on a path of length $O(n)$ would cost $O(n \log n)$ time, so we somehow deal with small subtrees on a same path together to achieve a better total query time. As shown in Section~\ref{total_time}, the total time needed in a single run of \main becomes $O(n)$.
        
	    \begin{figure}[!ht]
			\centering
		    \includegraphics[width=0.6\linewidth]{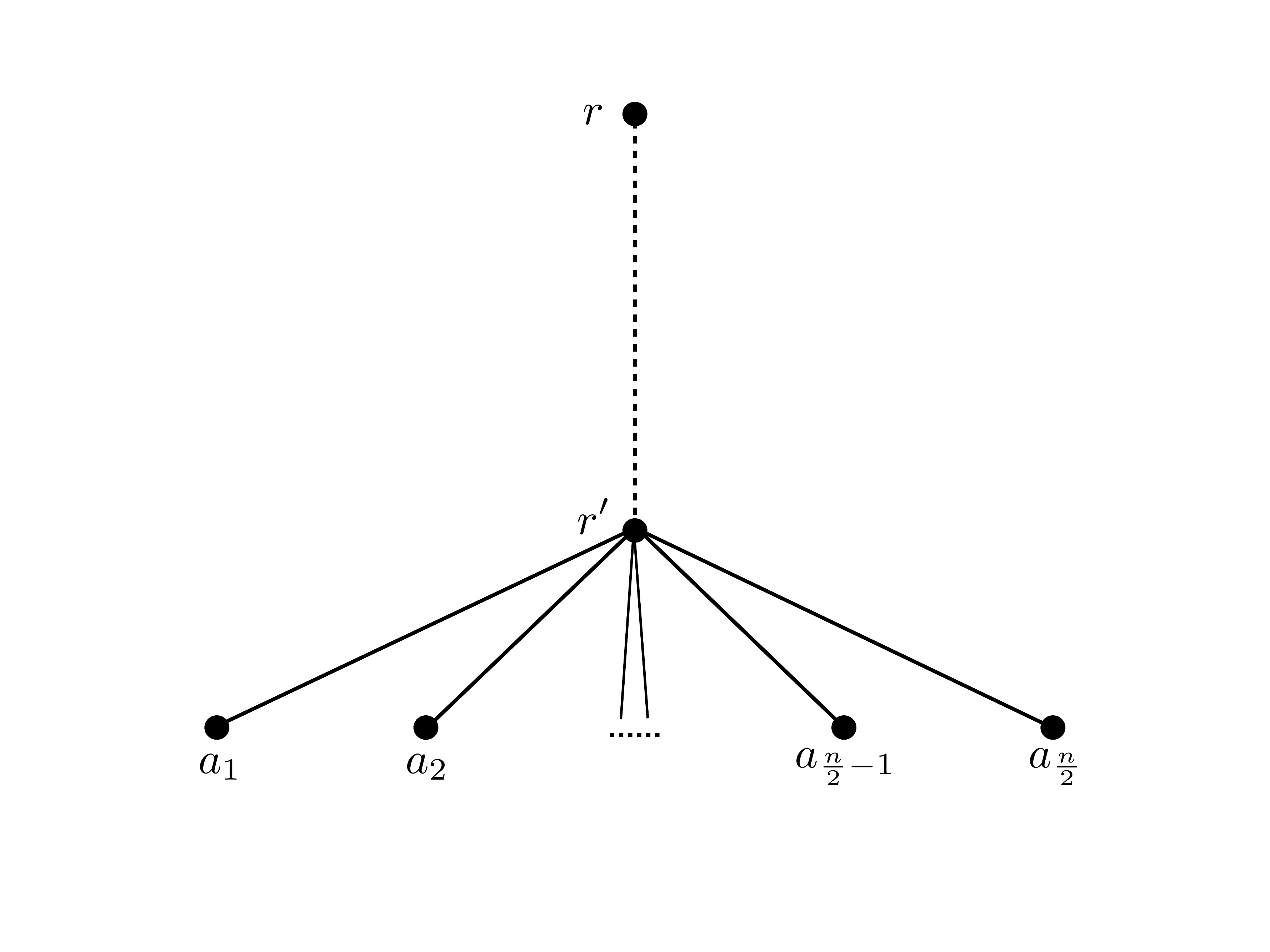} 
			\captionsetup{justification=centering}
			\caption{In this example, if we stick to the 2D-range-based data structure introduced before, then computing all $Q(T(a_i), r, r^\prime)$ would take as much as $O(n\log n)$ time.}
			\label{fig:heavy}
		\end{figure}
		The second example we considered is given as Figure \ref{fig:heavy}. In this tree, the original root is $r$. Suppose the distance between $r$ and $r'$ is $n / 2$. When we invoke $\dfs(r')$, $path(u, v)$ the path from $r$ to $r'$. Thus, we will call $T(a_1, r, r')$, $T(a_2, r, r')$, $\ldots$, $T(a_{n - 2}, r, r')$, which means we make $\Omega(n)$ queries. 
		In order to deal with this example in linear time, the main idea is using fractional cascading to answer all queries $Q(T(w), x, y)$ with a fixed $path(u, v)$, for all subtrees $T(w)$ with small size.
		
		In the examples shown above, all subtrees cut off $path(u, v)$ have constant size and thus the total time complexity for this example is $O(n)$. We will finally show that, by combining the two techniques mentioned above, it is enough to answer all queries $Q(T(w), x, y)$ in linear time, thus proving Lemma~\ref{query_time}.

		%For a short path $\mathit{path}(x, y)$, as described in Section~\ref{answering_for_short_path}, we preprocess (using the $\langle m \log n, \log n\rangle$ data structure) and store the answer for $Q(T(w), x, y) = Q(T(w), x, \mathit{par}(w))$ for every hanging subtree $T(w)$. For a long path, all queries involving large hanging subtrees are answered directly in $O(\log n)$ time, and the idea of fractional cascading is applied in Section~\ref{answering_for_small_subtree} to answer for all small subtrees on the path simultaneously.
    
    \subsection*{Data structure}
	The data structure consists of the following parts.
	\begin{enumerate}[(\romannumeral1)]
		\item Build the 2D-range successor data structure that answers any $Q(T(\cdot), \cdot, \cdot)$ in $O(\log n)$ time.
		\item For each ancestor-descendent pair $(u, v)$ such that $u$ is at most $2\log n$ hops above $v$, precompute and store the value of $Q(T(v), u, \mathit{par}(v))$.
		\item Apply Lemma \ref{tree_partition_lem} with parameter $k = \log n$ and obtain a marked set of size $O(n / \log n)$. Let $M$ be the set of all marked vertices $x$ such that $|T(x)| \geq \log n$. For every $v\notin M$, let $\mathit{anc}_v\in M$ be the nearest ancestor of $v$ in set $M$.
		
		Next we build a fractional cascading data structure for each $u\in M$ in the following way. Let $M_u$ be the set of all vertices in $T(u)$ whose tree paths to $u$ do not intersect any other vertices $u^\prime \neq u$ from $M$, namely $M_u = \{v\mid \mathit{anc}_v = u \}$; see Figure \ref{ds} for an example. Then, apply Lemma \ref{fractional_cascading} on all $N(v), v\in M_u$ where $N(v)$ is treated as sorted array in an ascending order with respect to depth of the edge endpoint opposite to $v$; this would build a fractional cascading data structure that, for any query encoded as a $w\in V$, answers for every $v\in M_u$ its highest neighbour below vertex $w$ in total time $O(|M_u| + \log n)$.
	\end{enumerate}
	
	Here is a structural property of $M$ that will be used when answering queries.
	\begin{lemma}\label{struct-marked}
		For any ancestor-descendent pair $(u, v)$, if $\mathit{path}(u, v)\cap M = \emptyset$, then $\mathit{path}(u, v)$ has $\leq 2\log n$ hops.
	\end{lemma}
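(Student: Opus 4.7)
The plan is to trace back to the full marked set $M_0$ produced by Lemma \ref{tree_partition_lem} with parameter $k = \log n$, from which $M$ is obtained by discarding those marked vertices $x$ with $|T(x)| < \log n$. The key observation to exploit is a dichotomy on vertices of $T$ relative to $M_0$: any $x \in M_0 \setminus M$ satisfies $|T(x)| < \log n$, while any maximal run of vertices outside $M_0$ lies in a single component of $T \setminus M_0$, which by the tree-partition lemma has at most $\log n$ vertices.

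First I would dispose of the easy sub-case where $\mathit{path}(u,v) \cap M_0 = \emptyset$. Then every vertex on $\mathit{path}(u,v)$ is outside $M_0$, and these vertices form a connected sub-path of $T$, so they all lie in one component of $T \setminus M_0$; hence there are at most $\log n$ such vertices, giving fewer than $2\log n$ hops. For the main sub-case, $\mathit{path}(u,v) \cap M_0 \neq \emptyset$, and I would let $x$ denote the highest (closest to $u$) vertex of $\mathit{path}(u,v)$ that lies in $M_0$. Since $x \notin M$, the crucial inequality $|T(x)| < \log n$ is available.

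I would then split $\mathit{path}(u,v)$ at $x$ into an upper segment $\mathit{path}(u,x)$ and a lower segment $\mathit{path}(x,v)$, and bound each by $\log n$ hops separately. All vertices of $\mathit{path}(u,x)$ except $x$ itself avoid $M_0$ by the choice of $x$, so they lie in one component of $T \setminus M_0$; therefore this segment has at most $\log n$ vertices outside of $x$, contributing at most $\log n$ hops once the edge into $x$ is counted. For the lower segment, since $v$ is a descendant of $x$ we have $\mathit{path}(x,v) \subseteq T(x)$, and $|T(x)| < \log n$ immediately bounds its length by fewer than $\log n$ hops. Adding the two bounds yields at most $2\log n$ hops on $\mathit{path}(u,v)$, as claimed.

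The only care needed is the hop-versus-vertex bookkeeping and the corner case $u = x$ (in which the upper segment is empty and the bound follows from $|T(x)| < \log n$ alone); neither is a genuine obstacle, and the argument is really just an unpacking of the defining property of $M$ against the guarantee of the tree-partition lemma.
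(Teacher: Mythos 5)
Your argument is correct and rests on exactly the same two ingredients as the paper's proof: the tree-partition guarantee forces a vertex of the full marked set $M_0$ within $\log n$ hops of $u$, and any such vertex excluded from $M$ has a subtree of size less than $\log n$, which bounds the remainder of the path. The paper phrases this as a short contradiction while you give a direct two-segment decomposition with explicit case analysis, but the underlying reasoning is identical.
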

	\begin{proof}
		Suppose otherwise. By definition of marked vertices there exists a marked vertex $w\in \mathit{path}(u, v)$ that is $\leq \log n$ hops below $u$. Then since $\mathit{path}(u, v)$ has $>2\log n$ many hops, it must be $T(w)\geq \log n$ which leads to $w\in M$, contradicting $\mathit{path}(u, v)\cap M = \emptyset$.
	\end{proof}
	
		\begin{center}
			\begin{figure}
				\begin{subfigure}{0.5\textwidth}
					\includegraphics[width=3in, height=2.4in]{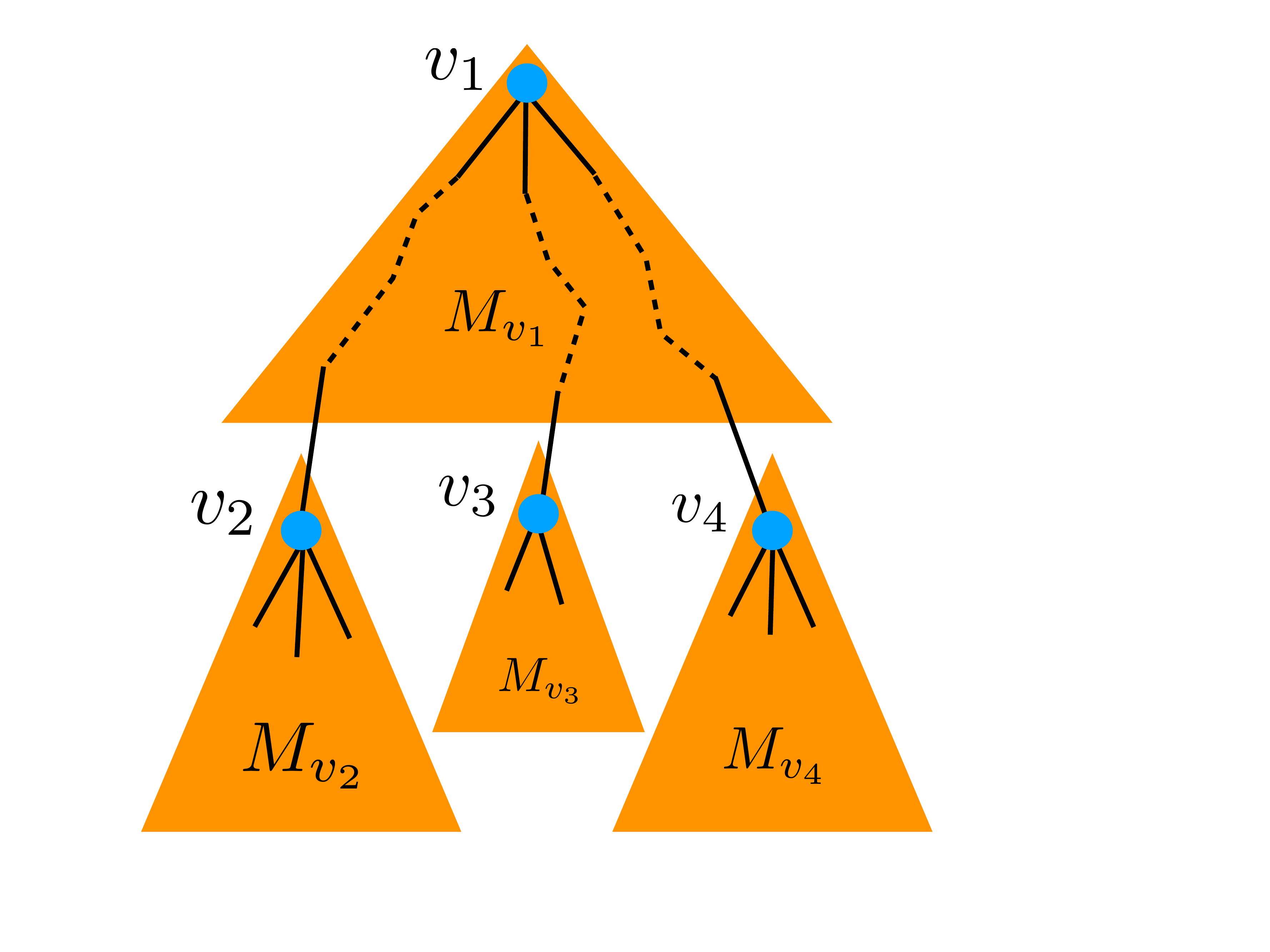}
					\caption{In this example, each blue node represents a vertex $v_i (1\leq i\leq 4)$ from set $M$, and $M_{v_i}$'s are drawn as yellow triangles. For each triangle, a fractional cascading data structure is built on adjacency lists of all vertices inside.}
					\label{ds}
				\end{subfigure}
				\quad
				\begin{subfigure}{0.5\textwidth}
					\includegraphics[width=3.3in, height=2.5in]{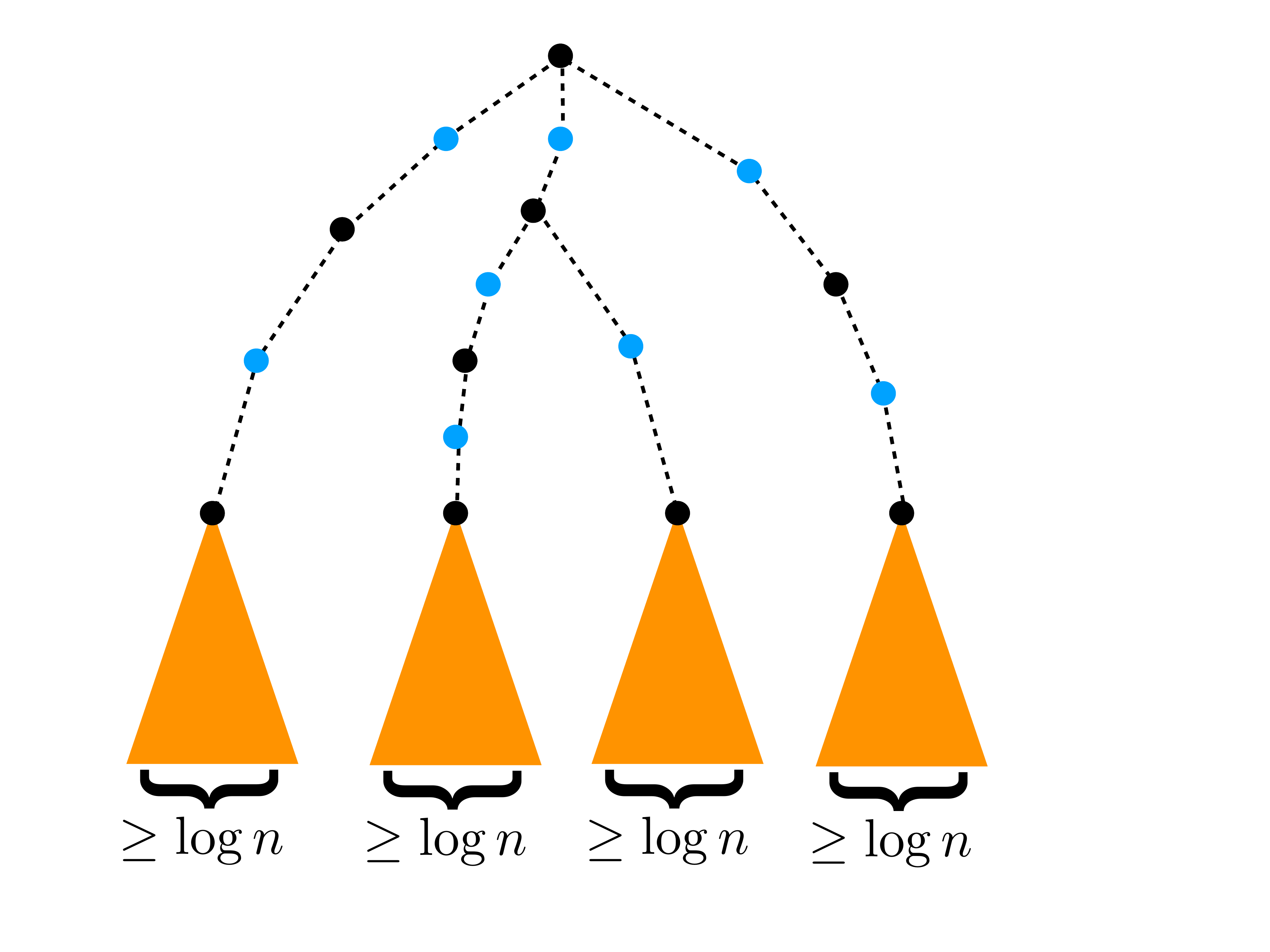}
					\caption{In this picture, sets $M$ and $X\cup \{r\}$ are drawn as blue nodes and black nodes respectively, and each yellow triangle is a subtree rooted at a leaf of $T[X]$, which has size $\geq \log n$. Note that every ancestor-descendent tree path between two black nodes contains a blue node.}
					\label{qry}
				\end{subfigure}
			\end{figure}
		\end{center}
	
	\subsection*{Preprocessing time}
	First of all, for part (\romannumeral1), as discussed in a previous subsection, 2D-range successor data structure takes time $O(m\log n)$ to initialize. Secondly, for part (\romannumeral3), on the one hand by Lemma \ref{tree_partition_lem} computing a tree partition takes time $O(n\log n)$; on the other hand, by Lemma \ref{fractional_cascading}, initializing the fractional cascading with respect to $u\in M$ costs $O(\sum_{v\in M_u}|N(v)|)$ time. Since, by definition of $M_u$, each $v\in V$ is contained in at most one $M_u, u\in M$, the overall time induced by this part would be $O(\sum_{u\in M}\sum_{v\in M_u}|N(v)|) = O(m)$.
	
	Preprocessing part (\romannumeral2) requires a bit of cautions. The procedure consists of two steps.
	\begin{enumerate}[(1)]
		\item For every ancestor-descendent pair $(u, v)$ such that $u$ is at most $2\log n$ hops above $v$, we mark $(u, v)$ if $u$ is incident to $T(v)$.
		
		Here goes the algorithm: for every edge $(u, w)\in E$ ($u$ being the ancestor), let $z\in \mathit{path}(u, w)$ be the vertex which is $2\log n$ hops below $u$ (if $\mathit{path}(u, w)$ has less than $2\log n$ hops, then simply let $z = w$); note that this $z$ can be found in constant time using the level-ancestor data structure \cite{bender2000lca} which can be initialized in $O(n)$ time. Then, for every vertex $v\in \mathit{path}(u, z)$, we mark the pair $(u, v)$. The total running time of this procedure is $O(m\log n)$ since each edge $(u, w)$ takes up $O(\log n)$ time.
		
		\item Next, for each $v\in V$, we compute all entries $Q(T(v), u, \mathit{par}(v))$ required by (\romannumeral2) in an incremental manner. Let $u_1, u_2, \cdots, u_{2\log n}$ be the nearest $2\log n$ ancestors of $v$ sorted in descending order with respect to depth, and then we directly solve the recursion 
		$Q(T(v), u_{i+1}, \mathit{par}(v)) = \begin{cases} 
		Q(T(v), u_{i}, \mathit{par}(v))	&	(u_{i+1}, v)\text{ is not marked}\\
		u_{i+1}	&	i=0\text{ or }(u_{i+1}, v)\text{ is marked}\\
		\end{cases}$ for all $0\leq i < 2\log n$ in $O(\log n)$ time. The total running time would thus be $O(n\log n)$.
	\end{enumerate}
	
	Summing up (\romannumeral1)(\romannumeral2)(\romannumeral3), the preprocessing time  is bounded by $O(m\log n)$.

	\subsection*{Query algorithm and total running time}
	We show how to utilize the above data structures (\romannumeral1)(\romannumeral2)(\romannumeral3) to implement $Q(T(\cdot), \cdot, \cdot)$ on line 9-11 in Algorithm \textsf{DFS} such that the overall time complexity induced by this part throughout a single execution of Algorithm \incremain{} is bounded by $O(n)$.
	
	Let us say we are given $(w_1, w_2, \cdots, w_t) = \mathit{path}(u, v)$ and we need to compute $Q(T(x), u, v)$ for every subtree $T(x)$ that is hanging on $\mathit{path}(u, v)$. There are three cases to discuss.
	
	\begin{enumerate}[(1)]
		\item If $\mathit{path}(u, v)\cap M = \emptyset$, by Lemma \ref{struct-marked} we claim $\mathit{path}(u, v)$ has at most $2\log n$ hops, and then we can directly retrieve the answer of $Q(T(x), u, v)$ from precomputed entries of (\romannumeral2), each taking constant query time.
	
		\item Second, consider the case where $\mathit{path}(u, v)\cap M \neq \emptyset$. Let $s_1, s_2, \cdots, s_l, l\geq 1$ be the consecutive sequence (in ascending order with respect to depth in tree $T$) of all vertices from $M$ that are on $\mathit{path}(u, v)$. For those subtrees $T(x)$ that are hanging on $\mathit{path}(u, \mathit{par}(s_1))$, we can directly retrieve the value of $Q(T(x), u, \mathit{par}(x))$ from (\romannumeral2) in constant time, as by Lemma~\ref{struct-marked} $\mathit{path}(u, \mathit{par}(s_1))$ has at most $2\log n$ hops.

		\item Third, we turn to study the value of $Q(T(x), u, \mathit{par}(x))$ when $\mathit{par}(x)$ belongs to a $\mathit{path}(s_i, \mathit{par}(s_{i+1})), i<l$ or $\mathit{path}(s_l, v)$. The algorithm is two-fold.
		\begin{enumerate}[(a)]
			\item First, we make a query of $u$ to the fractional cascading data structure built at vertex $s_i$ ($1\leq i\leq l$), namely part (\romannumeral3), which would give us, for every descendent $y\in M_{s_i}$, the highest neighbour of $y$ below $u$. Using this information we are able to derive the result of $Q(T(x), u, v)$ if $|T(x)| < \log n$, since in this case $T(x)\cap M = \emptyset$ and thus $T(x)\subseteq M_{s_i}$.
			
			By Lemma \ref{fractional_cascading} the total time of this procedure is $O(|M_{s_i}| + \log n)$. 
			\item We are left to deal with cases where $|T(x)| \geq \log n$. In this case, we directly compute $Q(T(x), u, v)$ using the 2D-range successor built in (\romannumeral1) which takes $O(\log n)$ time.
		\end{enumerate}
	\end{enumerate}

	Correctness of the query algorithm is self-evident by the algorithm. The total query time is analysed as following. Throughout an execution of Algorithm \incremain, (1) and (2) contribute at most $O(n)$ time since each $T(x)$ is involved in at most one such query $Q(T(x), u, v)$ which takes constant time. As for (3)(a), since each marked vertex $s\in M$ lies in at most one such path $(w_1, w_2, \cdots, w_t) = \mathit{path}(u, v)$, the fractional cascading data structure associated with $M_s$ is queried for at most once. Hence the total time of (3)(a) is $O(\sum_{s\in M}(|M_s| + \log n)) = O(n + |M|\log n) = O(n)$; the last equality holds by $|M|\leq O(n / \log n)$ due to Lemma \ref{tree_partition_lem}.

	Finally we analyse the total time taken by (3)(b). It suffices to upper-bound by $O(n / \log n)$ the total number of such $x$ with the property that $|T(x)| \geq \log n$ and $\mathit{path}(u, \mathit{par}(x))\cap M \neq \emptyset$. Let $X$ be the set of all such $x$'s.
	\begin{lemma}\label{struct-large-tree}
		Suppose $x_1, x_2\in X$ and $x_1$ is an ancestor of $x_2$ in tree $T$. Then $\mathit{path}(x_1, x_2)\cap M \neq \emptyset$.
	\end{lemma}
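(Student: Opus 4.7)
The plan is to show that for any $x_2\in X$, the top $u_2$ of the \dfs{} path during whose processing $x_2$ is enumerated as a hanging subtree must lie inside $T(x_1)$; once that is established, the marked vertex whose existence is guaranteed by $x_2\in X$ gets trapped on $\mathit{path}(x_1, x_2)$. Concretely, let $C_2$ be the call of \dfs{} that enumerates $x_2$ on lines~10--11, with path $\mathit{path}(u_2, v_2)$. Since $\mathit{par}(x_2)$ lies on $\mathit{path}(u_2, v_2)$ and $v_2$ is a descendant of $\mathit{par}(x_2)$ in $T$, and $x_2$ is a proper descendant of $x_1$, both $\mathit{par}(x_2)$ and $v_2$ sit in $T(x_1)$.

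\textbf{Key dynamic observation.} Because $x_1\in X$, there is an earlier call $C_1$ that enumerates $x_1$ as a hanging subtree on its own line~10; at that moment the entire $\mathit{path}(u_1, v_1)$ has already been marked visited on line~5 of $C_1$, so in particular $\mathit{par}(x_1)$ is visited. Any \dfs{} call that subsequently enumerates a descendant of $x_1$ as a hanging subtree --- and $C_2$ is such a call, since $\mathit{par}(x_2)\in T(x_1)$ --- is necessarily spawned from the recursion on lines~13--18 of $C_1$, hence strictly after $\mathit{par}(x_1)$ was marked. Therefore, when $C_2$ begins its upward walk from $v_2\in T(x_1)$ on lines~2--4, the climb cannot pass $\mathit{par}(x_1)$, and $u_2\in T(x_1)$. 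More precisely: if $x_1$ is still unvisited at the start of $C_2$ then $u_2 = x_1$, otherwise $u_2$ is a proper descendant of $x_1$ lying on $\mathit{path}(x_1, v_2)$.

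\textbf{Finishing and obstacle.} Since $u_2\in T(x_1)$ is an ancestor of $v_2$, and $\mathit{par}(x_2)$ is also on $\mathit{path}(u_2, v_2)$ with $u_2$ weakly above $\mathit{par}(x_2)$, both $u_2$ and $\mathit{par}(x_2)$ sit on $\mathit{path}(x_1, x_2)$, so $\mathit{path}(u_2, \mathit{par}(x_2)) \subseteq \mathit{path}(x_1, x_2)$. The defining property of $x_2\in X$ gives $\mathit{path}(u_2, \mathit{par}(x_2))\cap M \neq \emptyset$, whence $\mathit{path}(x_1, x_2)\cap M \neq \emptyset$, as required. The main obstacle is the dynamic step: one has to trace the control flow of Algorithm \dfs{} carefully to argue that every call enumerating a descendant of $x_1$ as a hanging subtree is spawned from within $C_1$ \emph{after} $\mathit{par}(x_1)$ is marked visited, so the upward walk in $C_2$ genuinely cannot escape $T(x_1)$; the remainder of the argument is a routine unfolding of the definition of $X$.
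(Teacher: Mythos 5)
Your proof is correct and is essentially the paper's argument run in the contrapositive direction: the paper assumes $\mathit{path}(x_1,x_2)\cap M=\emptyset$ and derives that $x_1$ and $\mathit{par}(x_1)$ would be marked in the same invocation (contradicting $x_1\in X$), while you argue directly that $x_1\in X$ forces $\mathit{par}(x_1)$ to be visited before any vertex of $T(x_1)$, trapping $u_2$ inside $T(x_1)$ so that the marked vertex guaranteed by $x_2\in X$ lands on $\mathit{path}(x_1,x_2)$. The key fact is identical in both versions, and your explicit handling of the temporal ordering of $C_1$ and $C_2$ (resting on the invariant that $T(u_1)$ is untouched when $C_1$ begins) is sound.
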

	\begin{proof}
		Suppose otherwise $\mathit{path}(x_1, x_2)\cap M = \emptyset$. Consider the time when query $Q(T(x_2), u, v)$ is made and let $\mathit{path}(u, v)$ be the path being visited by then. As $x_2\in X$, by definition it must be $\mathit{path}(u, \mathit{par}(x_2))\cap M\neq \emptyset$. Therefore, $\mathit{path}(u, x_2)$ is a strict extension of $\mathit{path}(x_1, x_2)$, and thus $x_1, \mathit{par}(x_1)\in \mathit{path}(u, x_2)$, which means $x_1$ and $\mathit{par}(x_1)$ become visited in the same invocation of Algorithm \textsf{DFS}. This is a contradiction since for any query of form $Q(T(x_1), \cdot, \cdot)$ to be made, by then $\mathit{par}(x_1)$ should be tagged ``visited'' while $x_1$ is not.
	\end{proof}

	Now we prove $|X| = O(n / \log n)$. Build a tree $T[X]$ on vertices $X\cup \{r\}$ in the natural way: for each $x\in X$, let its parent in $T[X]$ be $x$'s nearest ancestor in $X\cup \{r\}$. Because of $$|X| < 2\#\text{leaves of }T[X] + \#\text{vertices with a unique child in }T[X]$$ it suffices to bound the two terms on the right-hand side: on the one hand, the number of leaves of $T[X]$ is at most $n / \log n$ since for each leave $x$ it has $|T(x)|\geq \log n$; on the other hand, for each $x\in T[X]$ with a unique child $y\in T[X]$, by Lemma \ref{struct-large-tree} $\mathit{path}(x, y)\cap M \neq \emptyset$, and so we can charge this $x$ to an arbitrary vertex in $\mathit{path}(x, y)\cap M$, which immediately bounds the total number of such $x$'s by $|M| = O(n / \log n)$; see Figure \ref{qry} for an illustration. Overall, $|X| \leq O(n / \log n)$.

\subparagraph*{Acknowledgments.}
The authors would like to thank Shahbaz Khan, Kasper Green Larsen and Seth Pettie for many helpful discussions, and the anonymous reviewer for pointing out an issue in an earlier version of this paper.  
\bibliography{biblio}
\end{document}